\newcommand {\al}   {\alpha}       \newcommand {\bt}  {\beta}
\newcommand {\g }   {\gamma}       
\newcommand {\dl}   {\delta}       \newcommand {\e }  {\epsilon}
\newcommand {\Lm}   {\Lambda}      
\newcommand {\pl}   {\partial}     \newcommand {\nb}  {\nabla}
  \newcommand {\Sl}  {{\textsc{l}}}
  \newcommand {\St}  {{\textsc{t}}}
\renewcommand {\det}{{\sf\,det\,}}
\newcommand   {\const}{{\sf\,const}}     \newcommand   {\diag}{{\sf\,diag\,}}
\newcommand {\MO}  {{\mathbb O}}   \newcommand {\MR}  {{\mathbb R}}
\newcommand {\MS}  {{\mathbb S}}
\newtheorem{theorem}{Theorem}[section]
\begin{document}
\title    {Lorentz Invariant Vacuum Solutions in General Relativity}
\author    {M. O. Katanaev
            \thanks{E-mail: katanaev@mi.ras.ru. This work is supported by the
            RSF under a grant 14-50-00005.}\\ \\
            \sl Steklov Mathematical Institute,\\
            \sl ul.~Gubkina, 8, Moscow, 119991}
\date      {07 March 2015}
\maketitle
\begin{abstract}
All Lorentz invariant solutions of vacuum Einstein's equations are found. It is
proved that these solutions describe space-times of constant curvature.
\end{abstract}

Derivation of exact solutions of Einstein's equations is one of the main
problems in general relativity. Many classes of exact solutions are given in
\cite{KrStMaHe80}. As a rule, solutions are derived under the assumption of
some symmetry because Einstein's equations are very complicated. This simplifies
the system of equations and opens a possibility to look for solutions.

De Sitter and anti-de Sitter solutions \cite{deSitt16,deSitt17} were among the
first exact cosmological solutions of vacuum equations with a cosmological
constant. The De Sitter solution describes the space-time of constant
curvature and is invariant with respect to the action of the Lorentz group
$\MS\MO(1,4)$. The Ant-de Sitter solution corresponds to the constant curvature
space and is invariant with respect to the group $\MS\MO(2,3)$. Both symmetry
groups contain the Lorentz subgroup $\MS\MO(1,3)$ of lower dimension. In the
present paper, we prove that all vacuum solutions of Einstein's equations
invariant with respect to the action of the Lorentz group $\MS\MO(1,3)$ describe
space-times of constant curvature and therefor reduce either to de Sitter or to
anti-de Sitter solutions depending on the sign of the cosmological constant.

Consider the Minkowski space-time $\MR^{1,n-1}$ of arbitrary dimension $n$.
The Lorentz metric $\eta_{\al\bt}:=\diag(+-\dotsc-)$ is given in the Cartesian
coordinate system $x^\al$, $\al=0,1,\dotsc,n-1$. This metric is invariant with
respect to the Poincar\'e group and, in particular, with respect to the Lorentz
group. Corresponding Killing vector fields are
\begin{equation}                                                  \label{ekilor}
  K_{\e\dl}=\frac12\left(x_\dl\pl_\e-x_\e\pl_\dl\right)
  =\frac12\left(x_\dl\dl_\e^\g-x_\e\dl_\dl^\g\right)\pl_\g,
\end{equation}
where the indices $\e$ and $\dl$ number $n(n-1)/2$ Killing vectors and
$x_\al:=x^\bt\eta_{\bt\al}$. Let another metric $g_{\al\bt}(x)$ be given in the
Minkowski space-time $\MR^{1,n-1}$. We pose the following problem: find all
metrics $g_{\al\bt}$ invariant under the action of the Lorentz group
$\MS\MO(1,n-1)$.

The equations
\begin{equation*}
  \nb_\al K_\bt+\nb_\bt K_\al=0
\end{equation*}
for the Killing vector fields (\ref{ekilor}) take the form
\begin{equation}                                                  \label{ekillo}
  g_{\al\e}\eta_{\bt\dl}-g_{\al\dl}\eta_{\bt\e}+g_{\bt\e}\eta_{\al\dl}
  -g_{\bt\dl}\eta_{\al\e}+x_\dl\pl_\e g_{\al\bt}-x_\e\pl_\dl g_{\al\bt}=0.
\end{equation}

The metric components $g_{\al\bt}$ are components of a covariant second rank
tensor under the Lorentz transformations. They must be constructed from the
Lorentz metric $\eta_{\al\bt}$ and point coordinates $x=\lbrace x^\al\rbrace$.
The only possibility is the metric of the form
\begin{equation*}
  g_{\al\bt}=A\eta_{\al\bt}+Bx_\al x_\bt,
\end{equation*}
where $A$ and $B$ are some functions on $\MR^{1,n-1}$. The substitution of this
metric into the Killing equations (\ref{ekillo}) restricts the form of the
functions $A$ and $B$. One may prove that they can be arbitrary functions of
only one variable
$$
  s:=x^\al x^\bt\eta_{\al\bt},
$$
which is invariant under the Lorentz transformations. Thus, the metric invariant
with respect to the Lorentz transformations is parameterized by two arbitrary
functions $A(s)$ and $B(s)$. It is useful to rewrite it in another form
\begin{equation}                                                  \label{elocom}
  g_{\al\bt}=f(s)\Pi^\St_{\al\bt}+g(s)\Pi^\Sl_{\al\bt}=
             f\eta_{\al\bt}+(g-f)\frac{x_\al x_\bt}s,
\end{equation}
where $\Pi^\St$ and $\Pi^\Sl$ are projection operators:
\begin{equation*}
  \Pi^\St_{\al\bt}:=\eta_{\al\bt}-\frac{x_\al x_\bt}s,\qquad
  \Pi^\Sl_{\al\bt}:=\frac{x_\al x_\bt}s.
\end{equation*}

The determinant of the metric (\ref{elocom}) is easily computed:
\begin{equation}                                                  \label{edelme}
  \det g_{\al\bt}=(-f)^{n-1}g.
\end{equation}
Therefore, the Lorentz invariant metric is degenerate if and only if $fg=0$. We
suppose that functions $f$ and $g$ are sufficiently smooth, and $f>0$ and
$g\ne0$. Moreover we suppose that there exists a limit
\begin{equation*}
  \underset{s\to 0}\lim\frac{f(s)-g(s)}s,
\end{equation*}
which is necessary for the metric to be defined for $s=0$.

We use the Lorentz metric $\eta_{\al\bt}$ to raise and lower indices unless
otherwise stated.

The Lorentz invariant metric (\ref{elocom}) for $f=g$ was considered by
V.~A.~Fock \cite{Fock61R}.

The invariant interval
$$
  ds^2=fdx_\al dx^\al+(g-f)\frac{(x_\al dx^\al)^2}s
$$
corresponds to the metric (\ref{elocom}).

The metric tensor (\ref{elocom}) has the same form in all coordinate systems
related by Lorentz transformations. However, it changes under translations
$x^\al\mapsto x^\al+a^\al$ because the metric depends explicitly on coordinates
and the origin of the coordinate system is distinguished.

The expression for the metric (\ref{elocom}) in terms of projection operators is
useful because the inverse metric has a simple form
\begin{equation}                                                  \label{einmef}
  g^{\al\bt}=\frac1f\Pi^{\St\al\bt}+\frac1g\Pi^{\Sl\al\bt}.
\end{equation}

For $g>0$, the vierbein
\begin{equation}                                                  \label{eveilo}
  e_\al{}^a=\sqrt f\dl_\al^a+(\sqrt{g\vphantom{f}}-\sqrt f)\frac{x_\al x^a}s.
\end{equation}
can be attributed to the metric (\ref{elocom}).

One can easily check the following properties of the projection operators:
\begin{align*}
  \Pi^{\St\al\bt}x_\bt&=0,    &\Pi^\St_\al{}^\al&=n-1,
  &\pl_\al\Pi^\St_{\bt\g}&=-\frac{\Pi^\St_{\al\bt}x_\g+\Pi^\St_{\al\g}x_\bt}s,
\\
  \Pi^{\Sl\al\bt}x_\bt&=x^\al,&\Pi^\Sl_\al{}^\al&=1,
  &\pl_\al\Pi^\Sl_{\bt\g}&=~~\frac{\Pi^\St_{\al\bt}x_\g+\Pi^\St_{\al\g}x_\bt}s,
\end{align*}
which will be used in calculations below.

Simple calculations yield Christoffel's symbols for the metric (\ref{elocom})
\begin{equation}                                                  \label{echcoa}
  \Gamma_{\al\bt}{}^\g=\frac{f'}f(x_\al\Pi^\St_\bt{}^\g+x_\bt\Pi^\St_\al{}^\g)
  +\frac{g'}g(x_\al\Pi^\Sl_\bt{}^\g+x_\bt\Pi^\Sl_\al{}^\g
  -x^\g\Pi^\Sl_{\al\bt})+\frac{g-f-f's}{sg} x^\g\Pi^\St_{\al\bt},
\end{equation}
where the prime denotes differentiation with respect to $s$. The curvature
tensor for the metric (\ref{elocom}) is
\begin{align}                                                          \nonumber
  R_{\al\bt\g}{}^\dl=&\Pi^\St_{\al\g}\Pi^\St_\bt{}^\dl
  \left[\frac{(f+f's)^2}{sfg}-\frac1s\right]+
\\                                                                     \nonumber
  &+\Pi^\Sl_{\al\g}\Pi^\St_\bt{}^\dl
  \left[2\left(\frac{f+f's}f\right)^\prime+\left(\frac{f'}f-\frac{g'}g\right)
  \frac{f+f's}f\right]+
\\                                                                \label{eculoc}
  &+\Pi^\Sl_\al{}^\dl\Pi^\St_{\bt\g}
  \left[-2\left(\frac{f+f's}g\right)^\prime+\left(\frac{f'}f-\frac{g'}g\right)
  \frac{f+f's}g\right]-(\al\leftrightarrow\bt),
\end{align}
where $(\al\leftrightarrow\bt)$ stands for the preceding terms with the
exchanged indices. Contracting this expression in the indices $\bt$ and $\dl$
yields the Ricci tensor
\begin{align}                                                          \nonumber
  R_{\al\bt}=&\Pi^\St_{\al\bt}
  \left[\frac{n-2}s\left(\frac{(f+f's)^2}{fg}-1\right)
  +2\frac{(f+f's)'}g-\left(\frac{f'}f+\frac{g'}g\right)
  \frac{f+f's}g\right]+
\\                                                                \label{ericom}
  &+\Pi^\Sl_{\al\bt}(n-1)\left[2\frac{(f+f's)'}f
  -\left(\frac{f'}f+\frac{g'}g\right)\frac{f+f's}f\right].
\end{align}
The further contraction with the inverse metric (\ref{einmef}) gives the scalar
curvature
\begin{equation}                                                  \label{escurf}
  R=(n-1)\left[\frac{n-2}{fs}\left(\frac{(f+f's)^2}{fg}-1\right)
  +4\frac{(f+f's)'}{fg}-2\left(\frac{f'}f+\frac{g'}g\right)
  \frac{f+f's}{fg}\right].
\end{equation}

Constant curvature spaces defined by the equation
\begin{equation}                                                  \label{ecolor}
  R_{\al\bt\g\dl}=-\frac{2K}{n(n-1)}(g_{\al\g}g_{\bt\dl}-g_{\al\dl}g_{\bt\g}),
\end{equation}
with some constant $K$ automatically satisfy the vacuum Einstein's equations
with cosmological constant. Let us solve equation (\ref{ecolor}) for the Lorentz
invariant metric (\ref{elocom}). To this end, we lower the last index of the
curvature tensor (\ref{eculoc}) using the metric (\ref{elocom})
\begin{align}                                                     \label{eculoi}
  R_{\al\bt\g\dl}=&\Pi^\St_{\al\g}\Pi^\St_{\bt\dl}\frac1s
  \left[\frac{(f+f's)^2}g-f\right]+
\\                                                                     \nonumber
  &+(\Pi^\Sl_{\al\g}\Pi^\St_{\bt\dl}-\Pi^\Sl_{\al\dl}\Pi^\St_{\bt\g})
  \left[2(f+f's)'-\left(\frac{f'}f+\frac{g'}g\right)(f+f's)\right]
  -(\al\leftrightarrow\bt)
\end{align}
and substitute it into equation (\ref{ecolor}). As a result, we get the system
of differential equations for the functions $f$ and $g$:
\begin{align}                                                     \label{efgeqo}
  \frac{(f+f's)^2}{sg}-\frac fs&=-\frac{2K}{n(n-1)}f^2,
\\                                                                \label{efgeqt}
  2(f+f's)'-\left(\frac{f'}f+\frac{g'}g\right)(f+f's)&=-\frac{2K}{n(n-1)}fg.
\end{align}
The first equation yields a solution for the function $g$:
\begin{equation}                                                  \label{egfexe}
  g=\frac{(f+f's)^2}{f\left(1-\frac{2K}{n(n-1)}fs\right)}.
\end{equation}
Since the inequality $g\ne0$ must be fulfilled for the metric to be
nondegenerate, the function $f$ should satisfy the inequality
\begin{equation}                                                  \label{enfufu}
  f\ne\frac{n(n-1)}{2Ks},\qquad s\ne0.
\end{equation}
The substitution of expression (\ref{egfexe}) into the second equation
(\ref{efgeqt}) yields the identity. Thus, we have proved the first part of the
following statement:
\begin{theorem}
The Lorentz invariant metric
\begin{equation}                                                  \label{eloinm}
  g_{\al\bt}=f\Pi^\St_{\al\bt}
  +\frac{(f+f's)^2}{f\left(1-\frac{2K}{n(n-1)}fs\right)}\Pi^\Sl_{\al\bt},
\end{equation}
where $f(s)$ is an arbitrary positive function satisfying equation
(\ref{enfufu}), is the metric of a constant curvature space. Conversely, the
metric of a constant curvature space can be written in the Lorentz invariant
form (\ref{eloinm}) for some function $f(s)$.
\end{theorem}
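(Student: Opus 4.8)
The theorem has two directions, and the forward one is essentially settled by the computation preceding the statement. Substituting the fully lowered curvature tensor (\ref{eculoi}) into the constant-curvature condition (\ref{ecolor}), written with the metric (\ref{elocom}) on its right-hand side, and comparing the coefficients of the two independent antisymmetrized tensor structures $\Pi^\St_{\al\g}\Pi^\St_{\bt\dl}-(\al\leftrightarrow\bt)$ and $\Pi^\Sl_{\al\g}\Pi^\St_{\bt\dl}-(\al\leftrightarrow\bt)$ reproduces exactly the pair (\ref{efgeqo})--(\ref{efgeqt}); the purely longitudinal combination $\Pi^\Sl_{\al\g}\Pi^\Sl_{\bt\dl}-\Pi^\Sl_{\al\dl}\Pi^\Sl_{\bt\g}$ vanishes identically and contributes nothing. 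Since (\ref{efgeqo}) is solved by (\ref{egfexe}) and inserting (\ref{egfexe}) into (\ref{efgeqt}) yields an identity, the single scalar relation (\ref{egfexe}) is equivalent to the full tensor equation (\ref{ecolor}). Hence every admissible $f$ makes (\ref{eloinm}) a space of constant curvature $K$; the only remark I would add is that the two tensor structures are genuinely independent wherever $s\ne0$, so that matching coefficients loses no information.

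For the converse I would not integrate anything, but instead appeal to the local rigidity of constant-curvature spaces. The classical fact to invoke is that any two pseudo-Riemannian metrics whose curvature obeys (\ref{ecolor}) with the same $K$, the same dimension $n$ and the same Lorentzian signature are locally isometric; in geodesic normal coordinates about a point their components reduce to one and the same universal function of the curvature tensor, fixed by $K$ alone. The forward direction already exhibits, for every value of $K$, a representative of curvature $K$ inside the Lorentz-invariant family (\ref{eloinm}) --- fix any admissible profile $f$, which near a point with $s\ne0$ certainly exists. Consequently a given constant-curvature metric $\bar g$ of curvature $K$ and Lorentzian signature is, near a generic point, isometric to this representative.

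It then remains only to read the isometry as a coordinate change. If $\varphi$ is a local isometry from $(M,\bar g)$ onto an open set of $\MR^{1,n-1}$ carrying the representative (\ref{eloinm}), then in the chart $\varphi$ the metric $\bar g$ is literally (\ref{eloinm}); no separate equivariance under $\MS\MO(1,n-1)$ need be arranged, because the target metric is Lorentz invariant by its very construction. Here a generic point is one with $s=x^\al x_\al\ne0$, so that $\Pi^\St$, $\Pi^\Sl$ and the form (\ref{elocom}) are defined --- this is precisely why the converse is local and must exclude the light cone. Equivalently, and more constructively, I could realize the standard de Sitter and anti-de Sitter spaces as the hyperboloids $\eta_{AB}X^AX^B=\const$ in $\MR^{1,n}$ and $\MR^{2,n-1}$, on which $\MS\MO(1,n-1)$ acts linearly through $n$ of the ambient coordinates; parametrizing by those coordinates forces the induced metric into the shape (\ref{elocom}), whereupon the forward computation pins down $g$ through (\ref{egfexe}) and returns (\ref{eloinm}).

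The single real obstacle is this reliance on the uniqueness of constant-curvature spaces: it is what guarantees that the family (\ref{eloinm}) is exhaustive and rules out a constant-curvature metric admitting no Lorentz symmetry in any coordinate system. Everything else --- the coefficient matching of the forward part and the verification that (\ref{egfexe}) turns (\ref{efgeqt}) into an identity --- is routine algebra using the projection-operator identities recorded before the Christoffel symbols.
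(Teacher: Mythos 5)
Your proposal is correct and takes essentially the same route as the paper: the forward direction is the coefficient-matching computation culminating in (\ref{egfexe}), and the converse rests, just as in the paper, on the classical local uniqueness of constant-curvature pseudo-Riemannian spaces of given $K$, dimension and signature (the paper cites Wolf for exactly this nontrivial fact). The only difference is presentational: the paper makes the representative concrete by fixing the gauge $f=g$, which turns (\ref{egfexe}) into the ODE $f^{\prime2}s+2f'f+\tfrac{2K}{n(n-1)}f^3=0$ with solution $f=C/(C+\tfrac{K}{2n(n-1)}s)^2$, i.e.\ the standard conformally flat form (\ref{ecomeu}), whereas you argue with an arbitrary admissible representative via a local isometry.
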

\begin{proof}
It remains to prove that any metric of a constant curvature space can be
transformed into the Lorentz invariant form (\ref{elocom}). To show this,
we write the metric (\ref{eloinm}) in a more familiar form. To this end we fix
the function $f$ by putting $f=g$ in the initial representation (\ref{elocom}).
Then equation (\ref{egfexe}) yields the differential equation
$$
  f^{\prime2}s+2f'f+\frac{2K}{n(n-1)}f^3=0,
$$
which has a general solution
$$
  f=\frac C{(C+\frac K{2n(n-1)} s)^2},\qquad C=\const.
$$
The integration constant can be removed by rescaling coordinates. Therefore, we
put $C=1$ without loss of generality. As a result we obtain the metric of
constant curvature
\begin{equation}                                                  \label{ecomeu}
  g_{\al\bt}=\frac{\eta_{\al\bt}}{(1+\frac K{2n(n-1)}s)^2}.
\end{equation}
The fact that any constant curvature metric can be reduced to this form is well
known. The proof of this fact is nontrivial (see, e.g.\ \cite{Wolf72}).
\end{proof}
The performed calculations can be easily transferred to the Euclidean space
metric which is invariant with respect to $\MS\MO(n)$ rotations. In order to do
this, one should replace the Lorentzian metric $\eta_{\al\bt}$ by the Euclidean
metric $\dl_{\al\bt}$ in all formulas.

Since
\begin{equation*}
  \underset{s\to0}\lim\frac{g-f}s=2f'+\frac{2K}{n(n-1)}f^2,
\end{equation*}
the expression for the metric (\ref{eloinm}) is defined for $s=0$ as well.

Now we solve the vacuum Einstein equations with the cosmological constant
$$
  R_{\al\bt}=\Lm g_{\al\bt}
$$
for a Lorentz invariant metric. One could expect that these equations have
solutions not only of constant curvature because their number is less then
the number of equations in the constant curvature condition (\ref{ecolor}).
However, for Lorentz invariant metrics the classes of solutions coincide.
Indeed, the substitution of the Ricci tensor (\ref{ericom}) into Einstein's
equations yields the following system of equations:
\begin{align}                                                     \label{einone}
  \frac{n-2}s\left[\frac{(f+f's)^2}{fg}-1\right]
  +2\frac{(f+f's)'}g-\left(\frac{f'}f+\frac{g'}g\right)\frac{f+f's}g
  &=\Lm f,
\\                                                                \label{eintwo}
  (n-1)\left[2\frac{(f+f's)'}f-\left(\frac{f'}f+\frac{g'}g\right)
  \frac{f+f's}f\right] &=\Lm g.
\end{align}
The second equation coincides with equation (\ref{efgeqt}) for
$$
  \Lm=-\frac{2K}n.
$$
The linear combination of equations (\ref{einone}) and (\ref{eintwo}) with the
coefficients $1/f$ and $-1/g$, respectively, is equivalent to equation
(\ref{efgeqo}).

Thus, we proved that all Lorentz invariant solutions of the vacuum Einstein
equations with cosmological constant are exhausted by constant curvature spaces.

This work is supported by the Russian Science Foundation under grant
14--50--00005.


\begin{thebibliography}{1}

\bibitem{KrStMaHe80}
D.~Kramer, H.~Stephani, M.~MacCallum, and E.~Herlt.
\newblock {\em Exact Solutions of the Einsteins Field Equations}.
\newblock Deutscher Verlag der Wissenschaften, Berlin, 1980.

\bibitem{deSitt16}
W.~de~Sitter.
\newblock On {E}instein's theory of gravitation and its astronomical
  consequences. {S}econd paper.
\newblock {\em Mon.\ Not.\ Roy.\ Asrt.\ Soc.}, 77:155--184, 1916--1917.

\bibitem{deSitt17}
W.~de~Sitter.
\newblock On {E}instein's theory of gravitation and its astronomical
  consequences. {T}hird paper.
\newblock {\em Mon.\ Not.\ Roy.\ Asrt.\ Soc.}, 78:3--28, 1917.

\bibitem{Fock61}
V.~A.~Fock.
\newblock {\em The Theory of Space Time and Gravitation.}
\newblock Pergamon Press, London--New York, 1959.

\bibitem{Wolf72}
J.~A. Wolf.
\newblock {\em Spaces of constant curvature}.
\newblock University of California, Berkley, California, 1972.

\end{thebibliography}
\end{document}